\definecolor{darkgreen}{rgb}{0,0.5,0}
\newtheorem{theorem}{Theorem}[section]
\newtheorem{proposition}[theorem]{Proposition}
\newtheorem{lemma}[theorem]{Lemma}
\theoremstyle{definition}
\newtheorem{definition}[theorem]{Definition}
\newtheorem{example}[theorem]{Example}
\title{\bf Envy-Free House Allocation with Minimum Subsidy}
\author[1]{Davin Choo}
\author[1]{Yan Hao Ling}
\author[1]{Warut Suksompong}
\author[2]{Nicholas Teh}
\author[1]{Jian Zhang}
\affil[1]{National University of Singapore, Singapore}
\affil[2]{University of Oxford, UK}
\date{\vspace{-10mm}}
\begin{document}

\maketitle

\begin{abstract}
House allocation refers to the problem where $m$ houses are to be allocated to $n$ agents so that each agent receives one house. 
Since an envy-free house allocation does not always exist, we consider finding such an allocation in the presence of subsidy. 
We show that computing an envy-free allocation with minimum subsidy is NP-hard in general, but can be done efficiently if $m$ differs from~$n$ by an additive constant or if the agents have identical utilities.
\end{abstract}

\section{Introduction}

House allocation is a classic economic problem where $m$ houses are to be allocated to $n\le m$ agents with possibly differing preferences so that each agent receives exactly one house \citep{abdulkadirouglu1998random,atila1999house,hylland1979positions,zhou1990house}. 
Applications of this problem include the assignment of students to dormitory rooms, company workers to offices, and clients to web servers.

An important consideration in house allocation is fairness, which is often captured by the fundamental notion of \emph{envy-freeness} \citep{foley1967resource,varian1974equity}.
An allocation is said to be envy-free if no agent prefers a different agent's house to her own.
Unfortunately, an envy-free house allocation does not always exist, for instance, when there are two houses and two agents who both have value $200$ for the first house and $100$ for the second house.
\citet{gan2019envy} presented a polynomial-time algorithm that decides whether an envy-free house allocation exists and, if so, computes one such allocation.
Other works on house allocation have examined measures such as the number of envious agents or the aggregate envy among agents.
In particular, \citet{kamiyama2021complexity} showed that the problem of maximizing the number of envy-free agents is hard even to approximate, while \citet{madathil2023complexity} studied the complexity of several envy-related measures.
\citet{beynier2019local} and \citet{hosseini2023graphical} assumed that agents can only envy other agents with whom they are acquainted according to a given social graph; they investigated the case $m = n$ and derived complexity results for a range of graphs.

In order to circumvent the possible non-existence of envy-free allocations, we consider utilizing money in the form of \emph{subsidy}.
Specifically, we may give each agent some amount of subsidy in addition to the house.
For instance, in the example from the previous paragraph, if we assign the first house to one agent and the second house along with a subsidy of $100$ to the other agent, the resulting allocation is envy-free: each agent has value $200$ for her own bundle as well as for the other agent's bundle.
As we shall later elaborate, it follows from known results that with a sufficient amount of subsidy, attaining envy-freeness among the agents is always possible.
Therefore, we focus on the natural goal of achieving envy-freeness by using the \emph{minimum} subsidy.
The aforementioned result of \citet{gan2019envy} implies that we can decide in polynomial time whether there exists an envy-free house allocation with zero subsidy.
Can we also efficiently determine the smallest amount of subsidy when this amount is nonzero?

In this note, we provide a negative answer to the question above: perhaps surprisingly, finding the minimum subsidy required to achieve envy-freeness is NP-hard in general. 
This means that there does not exist a polynomial-time algorithm for this problem unless P = NP.
To complement this result, we show that the problem becomes polynomial-time solvable in two special cases: when $m$ differs from $n$ by an additive constant (including when $m = n$) and when the agents have identical utilities.

Before proceeding further, we briefly discuss two other related lines of work.
\begin{itemize}
\item The use of subsidy has received interest in the \emph{fair allocation of indivisible goods} \citep{halpern2019fair,brustle2020one,caragiannis2021subsidy,barman2022achieving,goko2024subsidy}.
In that setting, each agent may receive any number of goods, and it is usually required that all goods be allocated.
Observe that with these assumptions, deciding whether there exists an envy-free allocation with zero subsidy is already NP-hard---the hardness holds even for two agents with identical utilities, by a trivial reduction from the \textsc{Partition} problem.
By contrast, this task can be solved efficiently in house allocation \citep{gan2019envy}.

\item A number of authors have studied settings where $m = n$ goods along with a monetary asset or liability are divided among $n$ agents \citep{maskin1987division,klijn2000algorithm,sun2003sphouse,svensson1983rental}; in the case of a liability, the problem is also known as \emph{rent division} \citep{su1999rentalharmony,gal2017fairest}.
In these works, the monetary amount is fixed in advance, unlike in our work where we aim to optimize the amount subject to achieving envy-freeness.
\end{itemize}

\section{Preliminaries}

Let $[k] := \{1,2,\dots,k\}$ for any positive integer~$k$.
In the house allocation problem, we are given a set $N = [n]$ of $n$ agents and a set $H = \{h_1,\dots,h_m\}$ of $m$ houses; each agent $i \in N$ has a nonnegative utility $u_i(h_j)$ for each house $h_j\in H$.
A house allocation \emph{problem instance} consists of the set of agents, the set of houses, and the agents' utilities for the houses.
An \emph{allocation} $\mathbf{a} = (a_1,\dots,a_n)$ is a list of $n$ distinct houses, where house $a_i\in H$ is allocated to agent~$i$. 
The subsidy given to the agents is represented by a nonnegative \emph{subsidy vector} $\mathbf{s} = (s_1,\dots,s_n)$.
We refer to an allocation $\mathbf{a}$ with a corresponding subsidy vector $\mathbf{s}$ as an \emph{outcome} $(\mathbf{a},\mathbf{s})$. 
Then, we define an envy-free outcome as follows.

\begin{definition}
    An outcome $(\mathbf{a},\mathbf{s})$ is \emph{envy-free} 
    if $u_i(a_i) + s_i \geq u_i(a_{i'}) + s_{i'}$ for every pair of agents $i, i' \in N$.
\end{definition}

We say that a subsidy vector $\mathbf{s}$ is \emph{envy-eliminating} for an allocation $\mathbf{a}$ if the outcome $(\mathbf{a}, \mathbf{s})$ is envy-free.
Not every allocation admits an envy-eliminating subsidy vector, as illustrated by the following example.

\begin{example}
\label{ex:envy-freeable}
    Suppose there are $m = 2$ houses and $n = 2$ agents whose utilities are given by $u_1(h_1)= u_1(h_2)=u_2(h_1)=200$ and $u_2(h_2)=100$.
    We claim that there does not exist an envy-eliminating subsidy vector for the allocation $\mathbf{a} = (h_1,h_2)$.
    To see this, let $\mathbf{s} = (s_1,s_2)$ be any subsidy vector. In order for the outcome $(\mathbf{a},\mathbf{s})$ to be envy-free, the following must hold:
    \begin{align*}
        200 + s_1 = u_1(h_1) + s_1 & \geq u_1(h_2) + s_2 = 200 + s_2, \text{ and}\\
        100 + s_2 = u_2(h_2) + s_2 & \geq u_2(h_1) + s_1 = 200 + s_1.
    \end{align*}
    This implies that $s_1 \geq s_2 \geq 100+ s_1$, which is impossible.
\end{example}

To capture the phenomenon in Example~\ref{ex:envy-freeable}, we use the concept of envy-freeability, introduced by \citet{halpern2019fair} in the context of fairly allocating indivisible goods.

\begin{definition}
    An allocation $\mathbf{a}$ is \emph{envy-freeable} if there exists a subsidy vector $\mathbf{s}$ such that the outcome $(\mathbf{a}, \mathbf{s})$ is envy-free.
\end{definition}

While the allocation in Example~\ref{ex:envy-freeable} is not envy-freeable, a result by \citet[Theorem~1]{halpern2019fair} implies that every house allocation problem instance (with arbitrary $n$ and $m$) admits an envy-freeable allocation. 
Among all envy-freeable allocations, we are interested in finding one that requires the least amount of subsidy.
We say that $(\mathbf{a},\mathbf{s})$ is a \emph{minimum-subsidy envy-free outcome} if it is an envy-free outcome and for every allocation $\mathbf{a}'$ (possibly $\mathbf{a}' = \mathbf{a}$) with a corresponding envy-eliminating subsidy vector $\mathbf{s}'$, it holds that $\sum_{i \in N} s_i \leq \sum_{i \in N} s'_i$.

\section{NP-Hardness}

As mentioned earlier, \citet{gan2019envy} proved that determining whether there exists an envy-free house allocation with zero subsidy can be done in polynomial time.
Our main result shows that their algorithm \emph{cannot} be extended to arbitrary subsidy thresholds, provided that P $\ne$ NP.

\begin{theorem} \label{thm:hardness}
    The problem of computing a minimum-subsidy envy-free  outcome is \emph{NP}-hard. 
\end{theorem}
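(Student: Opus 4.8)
The plan is to exploit the minimum-subsidy characterization of \citet{halpern2019fair}: for a fixed allocation $\mathbf{a}$, form the \emph{envy graph} on vertex set $N$ with an arc from $i$ to $i'$ of weight $u_i(a_{i'}) - u_i(a_i)$. The allocation is envy-freeable exactly when this graph has no positive-weight directed cycle, and in that case the pointwise-minimal envy-eliminating subsidy vector assigns to each agent $i$ the maximum weight $\ell_i \geq 0$ of a directed path leaving $i$ (the empty path having weight $0$), so that the minimum total subsidy for $\mathbf{a}$ equals $\sum_{i \in N} \ell_i$. Computing this quantity for a single allocation is easy, as it is a longest-path computation in a graph with no positive cycle; the difficulty must therefore come entirely from the freedom to choose which of the $m$ houses to use and how to match them to agents. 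This is consistent with the two tractable regimes: when $m - n$ is constant only polynomially many house subsets are available, and when utilities are identical the envy structure collapses. Accordingly, the reduction I would build must use $m \gg n$ together with sharply heterogeneous utilities.

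First I would fix an NP-hard source problem whose feasible solutions are naturally described as selecting, for each of several ``units'', one option from a small menu subject to global consistency constraints; \textsc{3-SAT} (choose a truth value per variable, satisfy every clause) and \textsc{Vertex Cover} or \textsc{Exact Cover} (choose a bounded family of covering objects) are the natural candidates. I would introduce one group of \emph{unit agents} encoding these units, each equipped with a handful of ``meaningful'' houses representing its menu options and with utilities making every other house strictly worse, together with a large pool of interchangeable ``filler'' houses that forces $m \gg n$ and absorbs the unused options. The utilities would be tuned so that a matching corresponds to a choice of option per unit, and a second group of \emph{verifier agents} would create short envy chains when a constraint is satisfied and longer chains (larger longest-path values, hence more subsidy) when it is violated. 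The target threshold $B$ would be the unavoidable baseline subsidy of a fully consistent assignment; crucially $B > 0$, so that the task genuinely lies outside the reach of the zero-subsidy test of \citet{gan2019envy}.

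Correctness would split into two directions. For soundness I would show that any feasible solution of the source instance yields a matching whose envy graph has no positive-weight cycle (so the allocation is envy-freeable) and whose longest-path sum is exactly $B$, checking that the verifier gadgets add nothing beyond the baseline when all constraints hold and that filler houses never induce envy. For completeness I would prove the contrapositive: any outcome of total subsidy at most $B$ must place each unit agent on a menu house (else a forced large envy arc already exceeds $B$) and must satisfy every constraint (else some verifier chain lengthens the relevant longest path past the baseline). Together these give that the minimum total subsidy is at most $B$ if and only if the source instance is a yes-instance (and, for an optimization source, that the minimum subsidy is an efficiently invertible affine image of the source optimum), so any algorithm producing a minimum-subsidy outcome would decide the source problem.

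The main obstacle is \emph{global} control of the envy graph, because longest-path values are a nonlocal function of the allocation: an arc introduced by one gadget can lengthen a path threading through several others, so gadgets cannot be analyzed in isolation. I would need utilities guaranteeing that (i) every allocation worth considering is envy-freeable, i.e.\ has no positive-weight cycle, since otherwise its subsidy is effectively unbounded and the instance degenerates; (ii) the longest path leaving each agent is attained within its own gadget plus a controlled interface, so that the subsidy decomposes additively across gadgets and tracks the combinatorial objective; and (iii) all utilities stay polynomially bounded, as required for a genuine polynomial reduction and, if strong NP-hardness is wanted, to reduce from a strongly NP-hard source. Verifying these three properties simultaneously for \emph{every} allocation below the threshold, especially ruling out positive cycles and cross-gadget long paths, is where essentially all the work lies; the remaining bookkeeping of counting the baseline subsidy and matching fillers is routine.
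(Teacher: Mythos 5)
Your proposal is a blueprint, not a proof. You never commit to a source problem (you list \textsc{3-SAT}, \textsc{Vertex Cover}, and \textsc{Exact Cover} as candidates), never specify a single gadget, utility value, or threshold, and you close by conceding that verifying envy-freeability, locality of longest paths, and additive decomposition of the subsidy across gadgets ``for every allocation below the threshold \dots{} is where essentially all the work lies.'' That conceded part \emph{is} the theorem: what you have written is a correct recollection of the envy-graph/longest-path characterization of \citet{halpern2019fair}, a correct diagnosis of where the hardness must come from (the choice of which houses to allocate, since $m-n$ constant and identical utilities are tractable), and a plausible research plan --- but no reduction whose correctness anyone could check. Also, a minor correction: you assert the reduction ``must use $m \gg n$,'' but all that is needed is that $m-n$ be superconstant; in the paper's construction $n \approx |V|^4$ while $m - n = |V|^2 - |E| \le |V|^2$, so $m/n$ is close to $1$.

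It is worth seeing how the paper's actual construction dissolves, rather than solves, the global envy-graph obstacle you flag as the main difficulty. It reduces from \textsc{Vertex Cover}: given $G=(V,E)$ and $k < |V|-1$, it creates $|V|^4$ special agents and special houses (utility $1$), one edge agent per edge $\{x,y\}$ (utility $1$ for special houses and for houses of type $x_\text{good}$ or $y_\text{good}$), and $|V|^2$ vertex agents per vertex $w$, valuing the $|V|$ houses of type $w_\text{good}$ at $1+|V|^{-3}$ and the $|V|^2$ houses of type $w_\text{bad}$ at $1$; the threshold is $k/|V|$. Two quantitative tricks do all the work. First, the threshold is strictly below $1$, so envy-freeness alone --- just the pairwise inequalities $u_i(a_i)+s_i \ge u_i(a_j)+s_j$, with no positive-cycle or longest-path analysis --- forces every special agent onto a special house and every edge agent onto a good house of one of its endpoints; hence the vertices whose good houses are taken by edge agents form a vertex cover. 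Second, scarcity ($|V|$ good houses versus $|V|^2$ vertex agents per vertex, with utility gap $|V|^{-3}$) means that each vertex so used charges at least $(|V|^2-|V|)\cdot|V|^{-3}$ in subsidy, so the minimum subsidy tracks the cover size at scale roughly $1/|V|$. Every envy argument is pairwise and local, and both directions of the equivalence are elementary counting. Without a concrete mechanism of this kind converting ``size of cover'' into ``amount of subsidy,'' the three properties you list as needing simultaneous verification remain exactly that --- unverified --- and the proposal cannot be credited as a proof.
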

\begin{proof}
    Given a house allocation problem instance and a threshold~$\gamma > 0$, consider the problem of deciding whether there exists an envy-free outcome $(\mathbf{a},\mathbf{s})$ such that $\sum_{i \in N} s_i \leq \gamma$.
    We will show that this decision problem is NP-hard; this immediately implies the NP-hardness of computing a minimum-subsidy envy-free outcome.
    
    We reduce from the NP-hard problem {\sc Vertex Cover}~\citep{karp1972complexity}.
    Given an undirected graph $G = (V,E)$, a subset $V'\subseteq V$ of vertices is said to be a \emph{vertex cover} of $G$ if every edge in $E$ contains at least one vertex in $V'$.
    An instance of \textsc{Vertex Cover} consists of an undirected graph $G = (V,E)$ and a positive integer $k$; it is a Yes-instance if $G$ admits a vertex cover of size at most $k$, and a No-instance otherwise.

    Consider an arbitrary instance of {\sc Vertex Cover}.
    Since any subset of $|V|-1$ vertices is trivially a vertex cover, we may assume that $k < |V| - 1$.
    We construct a house allocation problem instance with $n = |V|^4 + |V|^3 + |E|$ agents and $m = |V|^4 + |V|^3 + |V|^2$ houses.
    Note that since $|E| \leq \binom{|V|}{2} < |V|^2$, we have $n < m$.
    First, we specify the two classes of houses.
    \begin{itemize}
        \item \textbf{Special Houses}: Let there be $|V|^4$ special houses.
        \item \textbf{Vertex Houses}: For each vertex $v \in V$, let there be $|V|$ vertex houses of type $v_\text{good}$ and $|V|^2$ vertex houses of type $v_\text{bad}$.
        (Note that $v_\text{good}$ is not a single type but rather corresponds to $|V|$ different types, one for each $v\in V$; the same holds for $v_\text{bad}$.)
        Hence, the total number of vertex houses is $|V|^3 + |V|^2$.
    \end{itemize}    
    Next, we specify the three classes of agents along with their utilities.
    \begin{itemize}
        \item \textbf{Special Agents}: Let there be $|V|^4$ special agents.
        Each special agent has utility~$1$ for each special house and utility~$0$ for every other house.
        \item \textbf{Edge Agents}: For each edge $e  \in E$, let there be one edge agent corresponding to $e$.
        Hence, the total number of edge agents is $|E|$. 
        If $e = \{x,y\}$, the edge agent corresponding to $e$ has utility~$1$ for each special house as well as for each vertex house of type $x_\text{good}$ or $y_\text{good}$, and utility~$0$ for every other house.
        \item \textbf{Vertex Agents}: For each vertex $w \in V$, let there be $|V|^2$ vertex agents of type $w$. 
        Hence, the total number of vertex agents is $|V|^3$.
        Each agent of type $w$ has utility $1+|V|^{-3}$ for each vertex house of type $w_\text{good}$, utility~$1$ for each vertex house of type $w_\text{bad}$, and utility~$0$ for every other house.
    \end{itemize}
The constructed house allocation instance is summarized in \Cref{tab:vertex-cover-reduction}; clearly, the construction takes polynomial time.

\begin{table}[htb]
\centering
\begin{tabular}{ccccc}
\toprule
&& \multicolumn{3}{c}{Houses}\\
\cmidrule{3-5}
&& \begin{tabular}{cc}
Special \\($|V|^4$) 
\end{tabular} & 
\begin{tabular}{cc}
Vertex
$v_\text{good}$ \\($|V|$ for each $v$) 
\end{tabular} & 
\begin{tabular}{cc}
Vertex
$v_\text{bad}$ \\($|V|^2$ for each $v$) 
\end{tabular}\\
\midrule
\multirow{6}{*}{\rotatebox[origin=c]{90}{Agents}}
& \multicolumn{1}{|c}{Special ($|V|^4$)} & $1$ & $0$ & $0$\\
& \multicolumn{1}{|c}{
\begin{tabular}{cc}
Vertex
$w$ \\($|V|^2$ for each $w\in V$) 
\end{tabular}} & $0$ &
$
\begin{cases}
1 + |V|^{-3}  &\text{if } v = w\\
0  &\text{otherwise}
\end{cases}
$
&
$\begin{cases}
1  &\text{if } v = w\\
0  &\text{otherwise}
\end{cases}$
\\
& \multicolumn{1}{|c}{
\begin{tabular}{cc}
Edge
$e = \{x,y\}$ \\($1$ for each $e\in E$) 
\end{tabular}} & $1$ &
$
\begin{cases}
1 &\text{if } v\in\{x,y\}\\
0 &\text{otherwise}
\end{cases}
$
& $0$
\\
\bottomrule
\end{tabular}
\caption{House allocation problem instance constructed in the reduction of \Cref{thm:hardness}.}
\label{tab:vertex-cover-reduction}
\end{table}

We will argue that $G$ admits a vertex cover of size at most $k$ if and only if there exists an envy-free outcome in the corresponding house allocation problem with total subsidy at most $k/|V|$; this is sufficient to complete the proof.

\vspace{2mm}

($\Rightarrow$) Let $C\subseteq V$ be a vertex cover of $G$ such that $|C| \le k$. 
We will construct an envy-free outcome $(\mathbf{a},\mathbf{s})$ such that $\sum_{i \in N} s_i \leq k/|V|$.

First, choose an allocation $\mathbf{a}$ that satisfies the following properties:
\begin{itemize}
    \item Assign each special agent to a special house.
    \item Since $C$ is a vertex cover, for each edge agent corresponding to an edge $\{x,y\}$, at least one of $x$ and~$y$ belongs to $C$.
    Assign the edge agent to a vertex house of type $x_\text{good}$ or $y_\text{good}$ according to which of $x$ and $y$ belongs to $C$. 
    (If both $x$ and $y$ belong to $C$, choose arbitrarily.)
    Note that since there are $|V|$ vertex houses of type $v_\text{good}$ for each $v\in V$, each edge agent can indeed be assigned to a vertex house.
    \item Assign each vertex agent of type $v$ to a vertex house of type $v_\text{bad}$.
\end{itemize}
It is easy to see that such an allocation exists.
Next, we construct a subsidy vector $\mathbf{s}$ as follows:
\begin{equation*}
    s_i := 
    \begin{cases}
        |V|^{-3} & \text{if $i$ is a vertex agent of type $v$ for some $v \in C$}; \\
        0 & \text{otherwise}.
    \end{cases}
\end{equation*}
Since $|C| \leq k$ and there are $|V|^2$ vertex agents of each type $v \in C$, the total subsidy is 
\begin{equation*}
    \sum_{i \in N} s_i = \frac{|V|^2  \cdot |C|}{|V|^{3}} = \frac{|C|}{|V|} \leq \frac{k}{|V|}.
\end{equation*}

We now show that the outcome $(\mathbf{a},\mathbf{s})$ is envy-free.
Note that every agent has a utility of $1$ for her house.
Since special agents, edge agents, and vertex agents of each type $v \notin C$ have a maximum utility of~$1$ for every allocated house, they are envy-free among themselves.
In addition, these agents have utility~$0$ for every vertex house of type $v_\text{bad}$ for each $v \in C$, so they do not envy any vertex agent of type $v \in C$ (who is given a subsidy of $|V|^{-3} < 1$).
Finally, every vertex agent of type $v\in C$ has a utility of $1 + |V|^{-3}$ after subsidy, so these agents also do not envy any other agent.

\vspace{2mm}

($\Leftarrow$)
Let $(\mathbf{a},\mathbf{s})$ be an envy-free outcome with total subsidy at most $k/|V|$, that is,
\begin{equation} \label{eqn:nphard_contradiction}
    \sum_{i \in N} s_i \leq \frac{k}{|V|}.
\end{equation}
We will construct a vertex cover of $G$ of size at most $k$.
Let the set $T \subseteq V$ be defined by
\begin{equation*}
    T := \{ v \in V \mid \text{ there exists an edge agent receiving a house of type $v_\text{good}$ in $\mathbf{a}$}\}.
\end{equation*}
We claim that $T$ is a vertex cover of $G$ of size at most $k$.

First, we show that $T$ is a vertex cover of $G$.
Since $k < |V|-1$, we have that $\frac{k}{|V|} < \frac{|V|-1}{|V|} < 1$.
This means that $\sum_{i \in N} s_i \leq \frac{k}{|V|} < 1$, so no agent receives a subsidy of $1$ or more.

Since $n = |V|^4 + |V|^3 + |E| > |V|^3 + |V|^2 = m - |V|^4$, by the pigeonhole principle, some special house must be allocated.
If some special agent is not assigned to a special house, then a subsidy of at least~$1$ is required to make this agent envy-free towards an agent who is assigned to a special house, contradicting the assertion that each agent receives a subsidy of strictly less than~$1$.
Hence, every special agent must be assigned to a special house.

Now, since there are $|V|^4$ special houses and $|V|^4$ special agents, all special houses are allocated to special agents.
Hence, for each edge agent corresponding to an edge $\{x,y\} \in E$, in order not to require a subsidy of at least~$1$, the edge agent must be assigned to a vertex house of type $x_\text{good}$ or $y_\text{good}$.
By definition of $T$, this means that $T \cap \{x,y\} \neq \emptyset$ for every edge $\{x,y\} \in E$.
It follows that $T$ is a vertex cover of $G$.

It remains to show that $|T|\le k$. 
Consider an arbitrary vertex $v \in T$.
Since there exists an edge agent receiving a house of type $v_\text{good}$ (by definition of $T$), in order to avoid giving some vertex agent of type $v$ a subsidy of at least $1$, all vertex agents of type~$v$ must be assigned to a house of type $v_\text{good}$ or $v_\text{bad}$.
Since there are $|V|^2$ vertex agents of type $v$ but only $|V|$ vertex houses of type $v_\text{good}$, at least $|V|^2 - |V|$ of these agents require a subsidy of at least $|V|^{-3}$ each in order to be envy-free.
Thus, the total subsidy required is at least $|T| \cdot (|V|^2 - |V|) \cdot |V|^{-3}$. 
If $|T| \ge k+1$, we have
\begin{align*}
    \sum_{i \in N} s_i &\geq \frac{|T| \cdot (|V|^2 - |V|)}{|V|^3} 
    \geq \frac{(k+1) \cdot (|V|^2 - |V|)}{|V|^3} = \frac{1}{|V|} \cdot \left( k + 1 - \frac{k + 1}{|V|} \right) > \frac{k}{|V|},
\end{align*}
where the last inequality holds because $1 - \frac{k + 1}{|V|} > 0$, which follows from the assumption that $k < |V|-1$.
However, this is in conflict with (\ref{eqn:nphard_contradiction}).
It follows that $|T| \leq k$, as desired.
\end{proof}

We remark that the NP-hardness still holds even if we require the agents' utilities to be ``normalized'' in the sense that $\sum_{h\in H}u_i(h)$ is the same for all $i\in N$.
This is because we may modify each agent's utilities $u_i(h)$ to $\widehat{u}_i(h) := u_i(h) + c_i$ for some value $c_i \ge 0$; doing so does not affect whether a certain outcome is envy-free.

\section{Tractable Cases}

Theorem \ref{thm:hardness} reveals the general difficulty of finding a minimum-subsidy envy-free outcome in house allocation.
In this section, we show that the problem becomes tractable in two special cases.

\subsection{Similar Number of Agents and Houses}

We first consider the case where the number of agents and that of houses differ by an additive constant, and show the following result.

\begin{theorem}
\label{thm:similar-m-n}
When $m = n+c$ for some constant $c\ge 0$, a minimum-subsidy envy-free outcome can be computed in polynomial time.
\end{theorem}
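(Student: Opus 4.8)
The plan is to reduce to the balanced case $m=n$ by brute force over the unallocated houses, and then to solve the balanced case via linear-programming duality for the assignment problem. The key structural observation I would use first is that envy, as defined, only compares an agent's bundle with \emph{other agents'} bundles, so envy-freeability of an allocation depends only on the set $S$ of $n$ allocated houses together with the assignment; the $c=m-n$ unallocated houses are irrelevant. Combining this with the characterization of \citet{halpern2019fair} (an allocation is envy-freeable exactly when it maximizes utilitarian welfare over all reassignments of its allocated houses), the envy-freeable allocations using house set $S$ are precisely the maximum-weight perfect matchings between the $n$ agents and $S$. Since only envy-freeable allocations admit any envy-eliminating subsidy, these are the only candidates.

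Since $m-n=c$ is constant, I would enumerate all $\binom{m}{c}=O(m^{c})$ choices of which $c$ houses to leave unallocated (equivalently, all choices of the used set $S$); this is polynomial. For each $S$ I solve the resulting balanced subproblem and return the cheapest outcome found. This enumeration step is exactly what fails in the general reduction of \Cref{thm:hardness}, where the number of unused houses is large.

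For the balanced subproblem ($|S|=n$), the key claim I would establish is that the minimum total subsidy equals the optimal value of the linear program that minimizes $n\,t-\sum_{h\in S}q_h$ over all \emph{optimal} dual solutions $(p,q)$ of the assignment LP (that is, $p_i+q_h\ge u_i(h)$ for all $i,h$, with dual value equal to the maximum welfare), subject to $t\ge q_h$ for every $h\in S$. For the upper bound I would give a construction: from an optimal dual $(p,q)$, complementary slackness yields a maximum-weight perfect matching $\mathbf{a}$ using only tight edges, and setting $s_i:=\max_{h}q_h-q_{a_i}\ge 0$ makes every agent's post-subsidy value equal to $p_i+\max_h q_h$, which is envy-free and has total subsidy $n\max_h q_h-\sum_{h\in S}q_h$. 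For the lower bound I would turn any envy-free outcome $(\mathbf{a},\mathbf{s})$ with $\mathbf{a}$ onto $S$ into an optimal dual by setting $p_i:=u_i(a_i)+s_i$ and $q_{a_j}:=-s_j$: feasibility is immediate from envy-freeness, optimality holds because the subsidies cancel so the dual value reduces to the welfare of $\mathbf{a}$, and the LP objective then evaluates to $\sum_i s_i-n\min_i s_i\le\sum_i s_i$. Together these two directions identify the minimum subsidy with the LP value, which is computable in polynomial time.

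The main obstacle is the balanced subproblem, and specifically proving that the minimum subsidy coincides with this dual LP value: the naive objective ``sum of longest paths in the envy graph'' is nonlinear in the choice of matching, and the whole point of the duality argument is to linearize that quantity. A secondary point I would check carefully is that restricting to optimal duals (equivalently, to welfare-maximizing matchings on $S$) is both necessary, since non-envy-freeable allocations cannot be subsidized into envy-freeness, and sufficient, via the complementary-slackness matching used in the construction.
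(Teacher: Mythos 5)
Your proposal is correct, and its outer layer---enumerating all $O(m^c)$ choices of the $c$ unallocated houses after observing that envy only involves allocated bundles---is exactly the paper's reduction to the balanced case (\Cref{prop:equal-m-n}). Where you genuinely diverge is in how the balanced case is solved. The paper invokes two external results as black boxes: Theorem~2 of \citet{halpern2019fair} (part of \Cref{lem:halpern-shah}), which computes a minimum envy-eliminating subsidy vector for a \emph{given} envy-freeable (i.e., welfare-maximizing) matching, and Lemma~3 of \citet{barman2022achieving} (\Cref{lem:barman}), whose permutation argument is needed precisely to rule out the worry that different maximum-weight matchings could require different total subsidies. Your LP-duality argument replaces both. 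The upper-bound direction (a tight-edge matching plus $s_i = \max_h q_h - q_{a_i}$) substitutes for Halpern--Shah's subsidy computation, and the lower-bound direction---turning an arbitrary envy-free outcome on $S$ into an optimal dual with objective value $\sum_i s_i - n\min_i s_i \le \sum_i s_i$---bounds the subsidy of \emph{every} envy-free outcome on $S$ at once, regardless of which matching it uses, so the permutation lemma becomes unnecessary; as a bonus, weak duality in that step re-derives the ``envy-freeable $\Rightarrow$ welfare-maximizing'' half of the characterization you cite. What the paper's route buys is brevity and lighter, purely combinatorial machinery (one maximum-weight matching computation plus citations); what your route buys is self-containedness and an explicit certificate of optimality, namely the identity between the minimum subsidy on $S$ and the value of a polynomial-size LP over the optimal dual face of the assignment LP. Both routes are polynomial-time, so both establish the theorem.
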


Observe that when $m = n+c$, the number of possible subsets of houses to be allocated to the $n$~agents is $\binom{n+c}{n} \in O(n^c)$, which is polynomial in the input size.
Given this, we may find a minimum-subsidy envy-free outcome for each subset of $n$ houses, and output an outcome with the smallest subsidy across all such subsets.
Hence, to establish \Cref{thm:similar-m-n}, it suffices to prove the following proposition.

\begin{proposition}
\label{prop:equal-m-n}
When $m = n$, a minimum-subsidy envy-free outcome can be computed in polynomial time.
\end{proposition}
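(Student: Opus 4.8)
When $m = n$, every house is allocated, so an allocation is just a bijection between agents and houses (a perfect matching). The plan is to characterize envy-freeable allocations, then show that among these the minimum subsidy can be computed via a clean combinatorial/LP argument.

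Let me think about this more carefully.The plan is to combine two known structural facts about envy-freeability (from \citet{halpern2019fair}) with a graph-theoretic reformulation that reduces the problem to computing a maximum-weight matching. First I would recall the characterization that an allocation $\mathbf{a}$ is envy-freeable if and only if it maximizes utilitarian social welfare over all \emph{reallocations of the same set of houses to the same agents}; equivalently, the ``envy graph'' (with an edge from $i$ to $i'$ of weight $u_i(a_{i'}) - u_i(a_i)$) contains no positive-weight cycle. A second known fact I would invoke is that, for an envy-freeable allocation, the minimum total subsidy equals the sum over all agents of the maximum-weight path \emph{starting} at that agent in the envy graph; this gives a concrete formula for the optimal subsidy of a \emph{fixed} envy-freeable allocation, computable in polynomial time.

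The main conceptual step is then to argue that we need not search over all $n!$ allocations. The key observation is that when $m = n$, choosing an allocation means choosing a perfect matching between agents and houses, and an allocation is envy-freeable precisely when it has maximum utilitarian welfare among all perfect matchings (this is the ``welfare-maximizing $\Leftrightarrow$ envy-freeable'' characterization specialized to the case where all houses are used). So I would first compute a maximum-welfare perfect matching using a standard maximum-weight bipartite matching algorithm (e.g.\ the Hungarian algorithm). Any such matching is envy-freeable and, crucially, all of them are candidates for the minimum-subsidy outcome.

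The subtle part, and what I expect to be the main obstacle, is that different maximum-welfare matchings can require different amounts of subsidy, so it is not enough to pick an arbitrary welfare-maximizer and compute its subsidy. I would need to argue either (i) that every welfare-maximizing allocation requires the \emph{same} minimum subsidy, or (ii) that we can efficiently optimize over the set of welfare-maximizing matchings to minimize the path-based subsidy formula. I suspect the cleanest route is to show that the minimum subsidy is itself determined by welfare-type quantities that are invariant across all maximum-welfare matchings, so that computing one welfare-maximizing allocation and then applying the path-length subsidy formula already yields the global optimum. Establishing this invariance—reconciling the matching structure with the longest-path subsidy characterization—is where the real work lies.

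If that invariance fails to hold directly, the fallback plan is to set up the minimum-subsidy computation as a single linear program: variables $s_i \ge 0$ together with a choice of allocation encoded via the envy-free inequalities, minimizing $\sum_i s_i$. Since we have already restricted to welfare-maximizing allocations and the envy-free constraints are linear in the $s_i$ for a fixed allocation, an LP-duality or flow-based argument should certify that the optimum over this restricted, polynomially-describable feasible region is attained and computable in polynomial time, completing the proof of \Cref{prop:equal-m-n}.
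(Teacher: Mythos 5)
Your algorithmic setup coincides with the paper's: build the agent--house bipartite graph, compute a maximum-weight perfect matching (which, by the Halpern--Shah characterization, i.e., Lemma~\ref{lem:halpern-shah}, is exactly an envy-freeable allocation when $m=n$), and run their polynomial-time routine to obtain the minimum envy-eliminating subsidy vector for that fixed allocation. You also correctly isolate the crux: \emph{a priori}, different maximum-weight matchings could require different minimum subsidies, so returning an arbitrary one needs justification. But this is precisely the step you leave unproven---you say you ``suspect'' an invariance holds and that establishing it ``is where the real work lies.'' Your fallback is not a workable substitute either: the feasible region you describe requires \emph{choosing an allocation}, which is a combinatorial (integer) choice rather than a linear constraint, so minimizing $\sum_{i} s_i$ jointly over matchings and subsidy vectors is not a single LP, and no concrete duality or flow argument is given. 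As written, the proposal is an outline with its central claim conjectured rather than proved.

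The missing idea, which the paper supplies, is a permutation lemma of Barman et al.\ (Lemma~\ref{lem:barman}): if $(\mathbf{a},\mathbf{s})$ is an envy-free outcome and $\sigma$ is a permutation of $[n]$ such that $\mathbf{a}_\sigma$ is envy-freeable, then $(\mathbf{a}_\sigma,\mathbf{s}_\sigma)$ is also envy-free. This closes your gap as follows. Let $\mathbf{a}$ be the allocation from the computed maximum-weight matching and $\mathbf{s}$ its minimum envy-eliminating subsidy vector, and let $(\mathbf{a}',\mathbf{s}')$ be any minimum-subsidy envy-free outcome. Because $m=n$, both $\mathbf{a}$ and $\mathbf{a}'$ allocate \emph{all} $n$ houses, so there exists a permutation $\sigma$ with $\mathbf{a}'_\sigma=\mathbf{a}$. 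Since $\mathbf{a}$ is envy-freeable, the lemma yields that $(\mathbf{a},\mathbf{s}'_\sigma)$ is envy-free with the same total subsidy as $(\mathbf{a}',\mathbf{s}')$; and since $\mathbf{s}$ minimizes total subsidy among envy-eliminating vectors for $\mathbf{a}$, it follows that $\sum_{i\in N} s_i \le \sum_{i\in N} s'_i$. This is exactly the invariance you conjectured: every maximum-weight matching, paired with its own optimal subsidy vector, attains the global minimum. Note that the hypothesis $m=n$ is essential here---it forces every allocation to use the same set of houses, which is what guarantees the permutation $\sigma$ exists; without this, no such reduction to a single matching is available (and indeed the general problem is NP-hard by Theorem~\ref{thm:hardness}).
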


For a house allocation problem instance with $m = n$, consider a weighted bipartite graph $G$ with $n$ vertices on each side.
The vertices on one side correspond to the $n$ agents, the vertices on the other side correspond to the $n$ houses, and the weight of an edge is the utility that the corresponding agent has for the corresponding house.
Hence, a perfect matching of $G$ corresponds to a house allocation.
Results by \citet[Theorems~1 and 2]{halpern2019fair}, when translated to our house allocation setting, yield the following lemma.

\begin{lemma}[\citep{halpern2019fair}]
\label{lem:halpern-shah}
A perfect matching of $G$ is a maximum-weight perfect matching if and only if the corresponding allocation is envy-freeable.
Moreover, for any maximum-weight perfect matching of $G$ with a corresponding envy-freeable allocation~$\mathbf{a}$, there exists a polynomial-time algorithm that computes an envy-eliminating subsidy vector $\mathbf{s}$ for $\mathbf{a}$ such that for any other envy-eliminating subsidy vector $\mathbf{s}'$ for~$\mathbf{a}$, it holds that $\sum_{i\in N}s_i \le \sum_{i\in N}s'_i$.
\end{lemma}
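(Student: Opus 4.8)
The plan is to route everything through the \emph{envy graph} of an allocation $\mathbf{a}$: the complete directed graph on the agent set $N$ in which the arc from $i$ to $i'$ carries weight $w_{i,i'} := u_i(a_{i'}) - u_i(a_i)$, the amount by which $i$ would prefer $i'$'s house to her own. The defining inequalities of an envy-eliminating subsidy vector, $u_i(a_i)+s_i \ge u_i(a_{i'})+s_{i'}$, rearrange to the system of difference constraints $s_i - s_{i'} \ge w_{i,i'}$ for all $i,i' \in N$, together with $s_i \ge 0$. My first step is to recall the standard feasibility criterion for such a system: summing the constraints around any directed cycle $i_1 \to \dots \to i_\ell \to i_1$ telescopes on the left to $0$ and forces the cycle's total weight to be nonpositive, so the system is infeasible whenever some cycle has strictly positive weight; conversely, absence of positive-weight cycles makes longest paths well defined and yields a solution (constructed below). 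This already characterizes envy-freeability of $\mathbf{a}$ purely in terms of the envy graph, independently of the matching.

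Second, I would connect the cycle condition to the weight of the matching, using crucially that $m = n$, so that every perfect matching of $G$ allocates exactly the same set of $n$ houses and therefore differs from $\mathbf{a}$ only by a permutation of those houses among the agents. The key identity is that cyclically reassigning houses along a directed cycle $i_1 \to \dots \to i_\ell \to i_1$ (giving agent $i_j$ the house $a_{i_{j+1}}$) changes the total utility $\sum_i u_i(a_i)$ by exactly the weight of that cycle in the envy graph. Hence a positive-weight cycle yields a weight-improving reassignment, witnessing that $\mathbf{a}$ is not maximum-weight; conversely, any strictly heavier perfect matching, read as a permutation relative to $\mathbf{a}$, decomposes into disjoint cycles whose weight gains sum to the positive total improvement, so at least one cycle is positive. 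Combined with the feasibility criterion, this gives the claimed chain of equivalences: $\mathbf{a}$ corresponds to a maximum-weight perfect matching $\iff$ the envy graph has no positive-weight cycle $\iff$ $\mathbf{a}$ is envy-freeable.

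Finally, for the minimum-subsidy claim I would exhibit the pointwise-smallest feasible subsidy vector explicitly. Since the envy graph has no positive-weight cycle, for each agent $i$ the quantity $s_i := \max\{0,\ \text{maximum weight of a directed path starting at } i\}$ is well defined and nonnegative, and is computable in polynomial time by a Bellman--Ford-style longest-path computation. I would check that this $\mathbf{s}$ satisfies every constraint $s_i - s_{i'} \ge w_{i,i'}$ by prepending the arc $i \to i'$ to a best path out of $i'$, and that it is dominated coordinatewise by every feasible $\mathbf{s}'$ by following any path out of $i$ and repeatedly applying $s'_{i_{j}} \ge w_{i_j,i_{j+1}} + s'_{i_{j+1}}$, ending with $s'_{\cdot}\ge 0$. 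Coordinatewise minimality of $\mathbf{s}$ immediately gives $\sum_{i\in N} s_i \le \sum_{i\in N} s'_i$, the required conclusion. I expect the main obstacle to be the cycle-decomposition argument of the second step, namely making precise how an arbitrary competing perfect matching is compared to $\mathbf{a}$ as a permutation and why the weight difference splits cleanly into per-cycle contributions; the feasibility criterion and the longest-path construction of the third step are by now routine.
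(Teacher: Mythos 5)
Your proposal is correct, but note that the paper never proves this lemma at all: it is imported directly from \citet{halpern2019fair} (their Theorems~1 and~2), with the only ``translation'' being the observation that in the $m=n$ setting, reassignments of bundles among agents correspond exactly to perfect matchings of $G$. What you have written is essentially a reconstruction of the Halpern--Shah argument itself---the envy graph with weights $w_{i,i'} = u_i(a_{i'}) - u_i(a_i)$, infeasibility of the difference constraints exactly when a positive-weight cycle exists, the identity between cycle weight and the welfare change under cyclic reassignment, the cycle decomposition of a heavier matching viewed as a permutation, and longest-path subsidies that are coordinatewise (hence sum) minimal---so your route coincides with the approach underlying the paper's citation, and it correctly supplies the specialization the paper leaves implicit. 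One small point to tighten: when you prepend the arc $i \to i'$ to a best path out of $i'$, the result may revisit $i$ and so be a walk rather than a simple path; this is harmless because, in the absence of positive-weight cycles, the maximum weight over walks equals the maximum weight over simple paths (deleting any cycle from a walk cannot decrease its weight), but a careful write-up should state this explicitly since it is also what makes the Bellman--Ford-style computation legitimate.
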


Now, a graph $G$ may admit multiple maximum-weight perfect matchings, and \emph{a priori}, it is conceivable that one matching leads to a minimum-subsidy envy-free outcome while another does not.
Interestingly, we will show that this in fact cannot happen.
To do so, we leverage a result of \citet{barman2022achieving}.
Recall that a permutation $\sigma$ of $[n]$ is a bijection $\sigma : [n] \rightarrow [n]$.
For a permutation $\sigma$ of $[n]$ and an allocation $\mathbf{a}$, let $\mathbf{a}_\sigma := (a_{\sigma(1)},\dots,a_{\sigma(n)})$ be the allocation where each agent $i \in N$ receives the house $a_{\sigma(i)}$. 
Similarly, for a permutation $\sigma$ of $[n]$ and a subsidy vector $\mathbf{s}$, let $\mathbf{s}_\sigma := (s_{\sigma(1)},\dots,s_{\sigma(n)})$ be the subsidy vector where each agent $i \in N$ receives the subsidy $s_{\sigma(i)}$.

\begin{lemma}[Lemma~3 of \citep{barman2022achieving}]
\label{lem:barman}
Let $(\mathbf{a}, \mathbf{s})$ be an envy-free outcome, and let $\sigma$ be a permutation of $[n]$ such that the allocation $\mathbf{a}_\sigma$ is envy-freeable.
        Then, $(\mathbf{a}_\sigma, \mathbf{s}_\sigma)$ is an envy-free outcome as well.
\end{lemma}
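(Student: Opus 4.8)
The plan is to verify the envy-freeness inequalities for $(\mathbf{a}_\sigma, \mathbf{s}_\sigma)$ directly, namely that $u_i(a_{\sigma(i)}) + s_{\sigma(i)} \ge u_i(a_{\sigma(j)}) + s_{\sigma(j)}$ for every pair of agents $i, j \in N$. The tempting first move --- expanding the hypothesis that $(\mathbf{a}, \mathbf{s})$ is envy-free --- immediately runs into trouble: applying envy-freeness of $(\mathbf{a},\mathbf{s})$ to agent $i$ and the house $a_{\sigma(i)}$ yields $u_i(a_i) + s_i \ge u_i(a_{\sigma(i)}) + s_{\sigma(i)}$, which bounds the quantity $u_i(a_{\sigma(i)}) + s_{\sigma(i)}$ from \emph{above}, whereas the target inequality needs it bounded from below. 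So the crux of the argument must be an extra ingredient that converts this ``wrong-direction'' inequality into an equality.

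That ingredient is the envy-freeability of \emph{both} allocations. First I would observe that since $(\mathbf{a},\mathbf{s})$ is envy-free, $\mathbf{a}$ is itself envy-freeable, and by hypothesis so is $\mathbf{a}_\sigma$. As $\sigma$ merely permutes which agent receives which of the same $n$ houses, both $\mathbf{a}$ and $\mathbf{a}_\sigma$ correspond to perfect matchings of $G$, and by \Cref{lem:halpern-shah} each is therefore a maximum-weight perfect matching. Two maximum-weight perfect matchings have equal weight, which is precisely the welfare equality $\sum_{i \in N} u_i(a_i) = \sum_{i \in N} u_i(a_{\sigma(i)})$.

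Next I would sum the per-agent inequalities $u_i(a_i) + s_i \ge u_i(a_{\sigma(i)}) + s_{\sigma(i)}$ over all $i \in N$. Because $\sum_{i \in N} s_{\sigma(i)} = \sum_{i \in N} s_i$ (a permutation only reorders the subsidies) and the two welfare sums agree, the aggregated left- and right-hand sides coincide. A sum of nonnegative differences that equals zero forces each difference to vanish, giving the key equalities $u_i(a_i) + s_i = u_i(a_{\sigma(i)}) + s_{\sigma(i)}$ for every $i$. With these in hand the conclusion is immediate: $u_i(a_{\sigma(i)}) + s_{\sigma(i)} = u_i(a_i) + s_i \ge u_i(a_{\sigma(j)}) + s_{\sigma(j)}$, where the final inequality is simply envy-freeness of $(\mathbf{a},\mathbf{s})$ for agent $i$ against the house $a_{\sigma(j)}$.

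I expect the main obstacle to be recognizing that envy-freeability of $\mathbf{a}_\sigma$ --- not merely of $\mathbf{a}$ --- is exactly what is needed, combined with the averaging step that upgrades the ``wrong-direction'' inequalities into equalities via the welfare equality. Once that equality is identified through \Cref{lem:halpern-shah}, the remaining algebra is routine.
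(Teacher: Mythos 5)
Your proof is correct. Note that the paper itself does not prove this lemma---it imports it as Lemma~3 of \citet{barman2022achieving}---and your argument is essentially the one given in that reference: both $\mathbf{a}$ and $\mathbf{a}_\sigma$ are envy-freeable allocations of the same set of $n$ houses, so they have equal utilitarian welfare; summing the ``wrong-direction'' inequalities $u_i(a_i)+s_i \ge u_i(a_{\sigma(i)})+s_{\sigma(i)}$ then forces each one to be an equality, after which envy-freeness of $(\mathbf{a}_\sigma,\mathbf{s}_\sigma)$ follows by chaining with envy-freeness of $(\mathbf{a},\mathbf{s})$. One small point of care: \Cref{lem:halpern-shah} as stated in the paper concerns the bipartite graph for the case $m=n$, whereas the present lemma is stated for general $m \ge n$; to make your welfare-equality step airtight in general, either restrict the bipartite graph to the $n$ houses allocated by $\mathbf{a}$ (the same set allocated by $\mathbf{a}_\sigma$), or invoke Theorem~1 of \citet{halpern2019fair} directly, which characterizes envy-freeability as maximizing welfare over all permutations of the assigned bundles. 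This is a cosmetic adjustment, not a gap in the argument.
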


\begin{lemma} \label{lem:m=n_permutation2}
        Let $(\mathbf{a},\mathbf{s})$ be a minimum-subsidy envy-free outcome, and let $\sigma$ be a permutation of $[n]$ such that the allocation $\mathbf{a}_\sigma$ is envy-freeable.
        Then, $(\mathbf{a}_\sigma, \mathbf{s}_\sigma)$ is a minimum-subsidy envy-free outcome as well.
\end{lemma}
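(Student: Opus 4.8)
The plan is to combine Lemma~\ref{lem:barman} with the fact that applying a permutation $\sigma$ to an outcome does not change the total subsidy, since $\mathbf{s}_\sigma$ is merely a reordering of the entries of $\mathbf{s}$. First I would observe that $(\mathbf{a},\mathbf{s})$, being a minimum-subsidy envy-free outcome, is in particular an envy-free outcome; combined with the hypothesis that $\mathbf{a}_\sigma$ is envy-freeable, Lemma~\ref{lem:barman} immediately yields that $(\mathbf{a}_\sigma, \mathbf{s}_\sigma)$ is an envy-free outcome. The only remaining task is to verify that $(\mathbf{a}_\sigma, \mathbf{s}_\sigma)$ achieves the minimum total subsidy among all envy-freeable allocations.

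The key step is the equality $\sum_{i \in N} (\mathbf{s}_\sigma)_i = \sum_{i \in N} s_{\sigma(i)} = \sum_{i \in N} s_i$, which holds because $\sigma$ is a bijection of $[n]$ and hence $\{\sigma(1),\dots,\sigma(n)\} = [n]$. Thus the total subsidy is invariant under the permutation. Since $(\mathbf{a},\mathbf{s})$ is a minimum-subsidy envy-free outcome, its total subsidy $\sum_{i \in N} s_i$ is by definition a lower bound on the total subsidy of \emph{every} envy-free outcome across \emph{all} allocations. Because $(\mathbf{a}_\sigma, \mathbf{s}_\sigma)$ is an envy-free outcome with total subsidy exactly equal to this minimum value, it too attains the minimum, so it is a minimum-subsidy envy-free outcome.

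I expect this proof to be short and essentially mechanical, so there is no substantial obstacle; the two ingredients—preservation of envy-freeness under $\sigma$ (supplied by Lemma~\ref{lem:barman}) and invariance of the summed subsidy under reordering—are both straightforward. The one subtlety worth stating explicitly is that the definition of minimum-subsidy envy-free outcome quantifies over all allocations $\mathbf{a}'$, not just permutations of $\mathbf{a}$, so the optimality of $(\mathbf{a},\mathbf{s})$ is a global statement; this is precisely what lets us conclude global optimality of $(\mathbf{a}_\sigma, \mathbf{s}_\sigma)$ from the mere fact that its total subsidy matches that of $(\mathbf{a},\mathbf{s})$. I would make sure to cite Lemma~\ref{lem:barman} for the envy-freeness claim rather than re-deriving it.
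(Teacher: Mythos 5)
Your proposal is correct and follows exactly the paper's own argument: invoke Lemma~\ref{lem:barman} to get envy-freeness of $(\mathbf{a}_\sigma,\mathbf{s}_\sigma)$, note that $\mathbf{s}_\sigma$ is a reordering of $\mathbf{s}$ and hence has the same total subsidy, and conclude global minimality from the global minimality of $(\mathbf{a},\mathbf{s})$. Your explicit remark that the definition of minimum-subsidy envy-free outcome quantifies over all allocations (not just permutations of $\mathbf{a}$) is a nice clarification of the final step that the paper leaves implicit.
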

\begin{proof}
        From Lemma~\ref{lem:barman}, we have that $(\mathbf{a}_\sigma,\mathbf{s}_\sigma)$ is an envy-free outcome.
        Since $\mathbf{s}$ and $\mathbf{s}_\sigma$ are permutations of each other, they give rise to the same total subsidy.
        Since $(\mathbf{a},\mathbf{s})$ is a minimum-subsidy envy-free outcome, we conclude that $(\mathbf{a}_\sigma,\mathbf{s}_\sigma)$ is a minimum-subsidy envy-free outcome as well.
\end{proof}

With all the ingredients in place, we are ready to prove Proposition~\ref{prop:equal-m-n}.

\begin{proof}[Proof of Proposition~\ref{prop:equal-m-n}]
Given a house allocation problem instance with $m = n$, construct a weighted bipartite graph $G$ as described earlier.   
Find a maximum-weight perfect matching of $G$; it is well-known that this can be done in polynomial time.
Let $\mathbf{a}$ be the corresponding allocation.
By Lemma~\ref{lem:halpern-shah}, $\mathbf{a}$ is envy-freeable, and there exists a polynomial-time algorithm that computes an envy-eliminating subsidy vector $\mathbf{s}$ for $\mathbf{a}$ that minimizes the total subsidy among all envy-eliminating subsidy vectors for $\mathbf{a}$.

We claim that $(\mathbf{a}, \mathbf{s})$ is a minimum-subsidy envy-free outcome.
It is clear from our construction that $(\mathbf{a}, \mathbf{s})$ is an envy-free outcome.
Consider a minimum-subsidy envy-free outcome $(\mathbf{a}', \mathbf{s}')$, where possibly $\mathbf{a}' = \mathbf{a}$ and/or $\mathbf{s}' = \mathbf{s}$.
Let $\sigma$ be the permutation such that $\mathbf{a}'_\sigma = \mathbf{a}$.
Since $\mathbf{a}'_\sigma$ is envy-freeable, Lemma~\ref{lem:m=n_permutation2} implies that $(\mathbf{a}'_\sigma, \mathbf{s}'_\sigma) = (\mathbf{a}, \mathbf{s}'_\sigma)$ is also a minimum-subsidy envy-free outcome.
Hence, by our construction of $\mathbf{s}$, the outcome $(\mathbf{a},\mathbf{s})$ must be a minimum-subsidy envy-free outcome as well.
\end{proof}

\subsection{Identical Utilities}

Next, we consider the case where all agents have the same utilities for houses, that is, $u_i(h) = u_{i'}(h)$ for all $i,i'\in N$ and $h\in H$.
We show that a minimum-subsidy envy-free outcome can also be computed efficiently in this case.
Let us denote the common utility function by $u$.

\begin{theorem}
    When all agents have identical utilities, a minimum-subsidy envy-free outcome can be computed in polynomial time.
\end{theorem}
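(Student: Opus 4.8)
The plan is to reduce the problem to a simple selection task. First I would observe that under identical utilities, envy-freeness forces all agents to receive the same total value: for an outcome $(\mathbf{a},\mathbf{s})$, the condition $u(a_i)+s_i \ge u(a_{i'})+s_{i'}$ holds in both directions (by swapping the roles of $i$ and $i'$), so there is a common value $V$ with $u(a_i)+s_i = V$ for every $i \in N$. Conversely, any allocation $\mathbf{a}$ together with a value $V \ge \max_{i\in N} u(a_i)$ yields an envy-free outcome via $s_i := V - u(a_i) \ge 0$.

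Next I would note that only the \emph{set} of allocated houses matters. Let $S$ be the set of $n$ distinct houses used by $\mathbf{a}$; since utilities are identical, which agent receives which house is irrelevant to the total subsidy. For a fixed $S$, the total subsidy is $\sum_{i\in N} (V - u(a_i)) = nV - \sum_{h \in S} u(h)$, which is minimized by taking $V = \max_{h \in S} u(h)$. Hence the least subsidy achievable using exactly the houses in $S$ equals
\[
\mathrm{cost}(S) := n \cdot \max_{h \in S} u(h) - \sum_{h \in S} u(h),
\]
and the original problem becomes: over all size-$n$ subsets $S \subseteq H$, minimize $\mathrm{cost}(S)$.

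The key structural step is to show that the optimal $S$ can be taken to be a block of houses consecutive in sorted order. I would sort the houses so that $u(h_1) \ge u(h_2) \ge \cdots \ge u(h_m)$. Fixing which house attains the maximum, say $h_j$, every admissible $S$ with $\max_{h\in S} u(h) = u(h_j)$ draws its houses from $\{h_j, h_{j+1}, \dots, h_m\}$; since $\mathrm{cost}$ decreases as $\sum_{h \in S} u(h)$ increases while the maximum is held fixed, an exchange argument shows it is optimal to pick the $n$ houses of largest utility among these, namely the window $S_j := \{h_j, h_{j+1}, \dots, h_{j+n-1}\}$. Therefore $\min_S \mathrm{cost}(S) = \min_{1 \le j \le m-n+1} \mathrm{cost}(S_j)$.

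Finally, the algorithm is immediate: sort the houses, precompute prefix sums of $u(h_k)$, evaluate $\mathrm{cost}(S_j) = n\,u(h_j) - \sum_{k=j}^{j+n-1}u(h_k)$ for each of the $m-n+1$ windows in $O(1)$ time each, and output the minimizing window together with the induced subsidy vector $s_i = V - u(a_i)$, where $V$ is the maximum utility in the chosen window. This runs in $O(m\log m)$ time. I expect the only nonroutine part to be the exchange argument establishing that a consecutive window is optimal; everything else is bookkeeping.
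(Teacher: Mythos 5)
Your proposal is correct and follows essentially the same route as the paper's proof: reduce to minimizing $n\cdot\max_{h\in S}u(h) - \sum_{h\in S}u(h)$ over size-$n$ sets $S$, argue that an optimal $S$ is a consecutive window in sorted order, and scan all $m-n+1$ windows using prefix sums. The only differences are cosmetic --- you sort in the opposite direction, spell out the exchange argument and the characterization of envy-free subsidies (which the paper treats as ``clear''), and count the $O(m\log m)$ sorting cost that the paper absorbs into a without-loss-of-generality assumption.
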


\begin{proof}
Assume without loss of generality that the houses are ordered in non-decreasing order of utility: $u(h_1)\le\dots\le u(h_m)$.
Consider any set of $n$ houses $\{h_{j_1},\dots,h_{j_n}\}$, where $j_1\le\dots\le j_n$.
It is clear that the minimum subsidy required to achieve envy-freeness when allocating these $n$ houses is 
\begin{equation*}
(u(h_{j_n}) - u(h_{j_1})) + (u(h_{j_n}) - u(h_{j_2})) + \dots + (u(h_{j_n}) - u(h_{j_{n-1}}))
= n\cdot u(h_{j_n}) - \sum_{k = 1}^n u(h_{j_k}).
\end{equation*}
Hence, conditioned on $h_{j_n}$ being the most valuable allocated house, the minimum subsidy required to achieve envy-freeness is attained when $\{j_1,\dots,j_{n-1}\} = \{j_n-(n-1), j_n-(n-2),\dots, j_n-1\}$.
This means that in order to compute a minimum-subsidy envy-free outcome, it suffices to check every set of $n$ consecutive houses according to the order $h_1,\dots,h_m$, and find the smallest subsidy required among these sets.
The number of such sets is $m-n+1$, so the computation can be done in polynomial time.

In order to optimize this computation, we can first compute the prefix sums $\sum_{k=1}^i u(h_k)$ for all $i\in [m]$ in $O(m)$ time, so that we can compute the sum $\sum_{k=p}^q u(h_k)$ for any $p,q\in [m]$ in $O(1)$ time.
Then, for each set of $m-n+1$ consecutive houses, we can compute the required subsidy in $O(1)$ time.
It follows that the algorithm can be implemented in $O(m)$ time.
\end{proof}

\section{Conclusion}

In this paper, we have studied the problem of achieving envy-freeness in house allocation through the use of subsidy, with the goal of minimizing the subsidy involved.
A natural follow-up direction is to investigate the effects of strategic behavior in fair house allocation with subsidy, as has been done in the fair allocation of indivisible goods with subsidy \citep{goko2024subsidy}.
Interestingly, we observe that no deterministic mechanism that always returns a minimum-subsidy envy-free outcome can be strategyproof, even in the case of two agents and two houses.

\begin{example}
    Consider a house allocation problem instance with two houses and two agents whose utilities for the houses are given by $u_1(h_1) = 25$, $u_1(h_2) = 75$, $u_2(h_1) = 0$, and $u_2(h_2) = 100$.
    Let $\mathcal{M}$ be any deterministic mechanism that always returns a minimum-subsidy envy-free outcome.

    In this problem instance, there are only two possible allocations: $(h_1,h_2)$ and $(h_2,h_1)$.
    One can check (similarly to Example~\ref{ex:envy-freeable}) that the allocation $(h_2,h_1)$ is not envy-freeable.
    On the other hand, for the allocation $(h_1,h_2)$, the minimum subsidy required to achieve envy-freeness is $50$, given by the subsidy vector $(s_1,s_2) = (50,0)$.
    Hence, $\mathcal{M}$ must return the allocation $(h_1,h_2)$ along with the subsidy vector $(50,0)$.
    Note that agent~$1$ receives a utility of $25 + 50 = 75$ from this outcome.

    Next, suppose that agent~$1$ misreports her utilities as $u'_1(h_1) = 20$ and $u'_1(h_2) = 80$.
    Again, the allocation $(h_2,h_1)$ is not envy-freeable.
    For the allocation $(h_1,h_2)$, the minimum subsidy required to achieve envy-freeness is $60$, given by the subsidy vector $(s'_1,s'_2) = (60,0)$.
    Hence, $\mathcal{M}$ must return the allocation $(h_1,h_2)$ along with the subsidy vector $(60,0)$.
    This results in a true utility of $25 + 60 = 85$ for agent~$1$.

    Hence, agent~$1$ benefits from misreporting, which means that $\mathcal{M}$ cannot be strategyproof.
\end{example}

Other possible directions include exploring approximation algorithms (or hardness of approximation) for minimizing the subsidy, investigating the combination of envy-freeness and \emph{Pareto efficiency} (i.e., the property that there is no other outcome with the same subsidy that makes at least one agent better off and no agent worse off), as well as deriving worst-case bounds on the amount of subsidy required to achieve envy-freeness or other fairness notions.

\subsection*{Acknowledgments}

This research/project is supported by the National Research Foundation, Singapore under its AI Singapore Programme (AISG Award No: AISG-PhD/2021-08-013).
The work is also supported by the Singapore Ministry of Education under grant number MOE-T2EP20221-0001 and by an NUS Start-up Grant.
We thank the anonymous reviewer for valuable feedback.

\bibliographystyle{plainnat}
\bibliography{abb,bib}

\begin{thebibliography}{23}
\providecommand{\natexlab}[1]{#1}
\providecommand{\url}[1]{\texttt{#1}}
\expandafter\ifx\csname urlstyle\endcsname\relax
  \providecommand{\doi}[1]{doi: #1}\else
  \providecommand{\doi}{doi: \begingroup \urlstyle{rm}\Url}\fi

\bibitem[Abdulkadiro\u{g}lu and S\"{o}nmez(1998)]{abdulkadirouglu1998random}
Atila Abdulkadiro\u{g}lu and Tayfun S\"{o}nmez.
\newblock Random serial dictatorship and the core from random endowments in
  house allocation problems.
\newblock \emph{Econometrica}, 66\penalty0 (3):\penalty0 689--701, 1998.

\bibitem[Abdulkadiro\u{g}lu and S\"{o}nmez(1999)]{atila1999house}
Atila Abdulkadiro\u{g}lu and Tayfun S\"{o}nmez.
\newblock House allocation with existing tenants.
\newblock \emph{Journal of Economic Theory}, 88\penalty0 (2):\penalty0
  233--260, 1999.

\bibitem[Barman et~al.(2022)Barman, Krishna, Narahari, and
  Sadhukhan]{barman2022achieving}
Siddharth Barman, Anand Krishna, Yadati Narahari, and Soumyarup Sadhukhan.
\newblock Achieving envy-freeness with limited subsidies under dichotomous
  valuations.
\newblock In \emph{Proceedings of the 31st International Joint Conference on
  Artificial Intelligence (IJCAI)}, pages 60--66, 2022.

\bibitem[Beynier et~al.(2019)Beynier, Chevaleyre, Gourv{\`e}s, Harutyunyan,
  Lesca, Maudet, and Wilczynski]{beynier2019local}
Aur{\'e}lie Beynier, Yann Chevaleyre, Laurent Gourv{\`e}s, Ararat Harutyunyan,
  Julien Lesca, Nicolas Maudet, and Ana{\"e}lle Wilczynski.
\newblock Local envy-freeness in house allocation problems.
\newblock \emph{Autonomous Agents and Multi-Agent Systems}, 33\penalty0
  (5):\penalty0 591--627, 2019.

\bibitem[Brustle et~al.(2020)Brustle, Dippel, Narayan, Suzuki, and
  Vetta]{brustle2020one}
Johannes Brustle, Jack Dippel, Vishnu~V. Narayan, Mashbat Suzuki, and Adrian
  Vetta.
\newblock One dollar each eliminates envy.
\newblock In \emph{Proceedings of the 21st ACM Conference on Economics and
  Computation (EC)}, pages 23--39, 2020.

\bibitem[Caragiannis and Ioannidis(2021)]{caragiannis2021subsidy}
Ioannis Caragiannis and Stavros~D. Ioannidis.
\newblock Computing envy-freeable allocations with limited subsidies.
\newblock In \emph{Proceedings of the 17th International Conference on Web and
  Internet Economics (WINE)}, pages 522--539, 2021.

\bibitem[Foley(1967)]{foley1967resource}
Duncan~Karl Foley.
\newblock Resource allocation and the public sector.
\newblock \emph{Yale Economic Essays}, 7\penalty0 (1):\penalty0 45--98, 1967.

\bibitem[Gal et~al.(2017)Gal, Mash, Procaccia, and Zick]{gal2017fairest}
Ya'akov~(Kobi) Gal, Moshe Mash, Ariel~D. Procaccia, and Yair Zick.
\newblock Which is the fairest (rent division) of them all?
\newblock \emph{Journal of the ACM}, 64\penalty0 (6):\penalty0 39:1--39:22,
  2017.

\bibitem[Gan et~al.(2019)Gan, Suksompong, and Voudouris]{gan2019envy}
Jiarui Gan, Warut Suksompong, and Alexandros~A. Voudouris.
\newblock Envy-freeness in house allocation problems.
\newblock \emph{Mathematical Social Sciences}, 101:\penalty0 104--106, 2019.

\bibitem[Goko et~al.(2024)Goko, Igarashi, Kawase, Makino, Sumita, Tamura,
  Yokoi, and Yokoo]{goko2024subsidy}
Hiromichi Goko, Ayumi Igarashi, Yasushi Kawase, Kazuhisa Makino, Hanna Sumita,
  Akihisa Tamura, Yu~Yokoi, and Makoto Yokoo.
\newblock A fair and truthful mechanism with limited subsidy.
\newblock \emph{Games and Economic Behavior}, 144:\penalty0 49--70, 2024.

\bibitem[Halpern and Shah(2019)]{halpern2019fair}
Daniel Halpern and Nisarg Shah.
\newblock Fair division with subsidy.
\newblock In \emph{Proceedings of the 12th International Symposium on
  Algorithmic Game Theory (SAGT)}, pages 374--389, 2019.

\bibitem[Hosseini et~al.(2023)Hosseini, Payan, Sengupta, Vaish, and
  Viswanathan]{hosseini2023graphical}
Hadi Hosseini, Justin Payan, Rik Sengupta, Rohit Vaish, and Vignesh
  Viswanathan.
\newblock Graphical house allocation.
\newblock In \emph{Proceedings of the 22nd International Conference on
  Autonomous Agents and Multiagent Systems (AAMAS)}, pages 161--169, 2023.

\bibitem[Hylland and Zeckhauser(1979)]{hylland1979positions}
Aanund Hylland and Richard Zeckhauser.
\newblock The efficient allocation of individuals to positions.
\newblock \emph{Journal of Political Economy}, 87\penalty0 (2):\penalty0
  293--314, 1979.

\bibitem[Kamiyama et~al.(2021)Kamiyama, Manurangsi, and
  Suksompong]{kamiyama2021complexity}
Naoyuki Kamiyama, Pasin Manurangsi, and Warut Suksompong.
\newblock On the complexity of fair house allocation.
\newblock \emph{Operations Research Letters}, 49\penalty0 (4):\penalty0
  572--577, 2021.

\bibitem[Karp(1972)]{karp1972complexity}
Richard~M. Karp.
\newblock Reducibility among combinatorial problems.
\newblock In \emph{Proceedings of a Symposium on the Complexity of Computer
  Computations}, pages 85--103, 1972.

\bibitem[Klijn(2000)]{klijn2000algorithm}
Flip Klijn.
\newblock An algorithm for envy-free allocations in an economy with indivisible
  objects and money.
\newblock \emph{Social Choice and Welfare}, 17\penalty0 (2):\penalty0 201--215,
  2000.

\bibitem[Madathil et~al.(2023)Madathil, Misra, and
  Sethia]{madathil2023complexity}
Jayakrishnan Madathil, Neeldhara Misra, and Aditi Sethia.
\newblock The complexity of minimizing envy in house allocation.
\newblock In \emph{Proceedings of the 22nd International Conference on
  Autonomous Agents and Multiagent Systems (AAMAS)}, pages 2673--2675, 2023.

\bibitem[Maskin(1987)]{maskin1987division}
Eric~S. Maskin.
\newblock On the fair allocation of indivisible goods.
\newblock In George~R. Feiwel, editor, \emph{Arrow and the Foundations of the
  Theory of Economic Policy}, pages 341--349. Palgrave Macmillan, 1987.

\bibitem[Su(1999)]{su1999rentalharmony}
Francis~E. Su.
\newblock Rental harmony: Sperner's lemma in fair division.
\newblock \emph{The American Mathematical Monthly}, 106\penalty0 (10):\penalty0
  930--942, 1999.

\bibitem[Sun and Yang(2003)]{sun2003sphouse}
Ning Sun and Zaifu Yang.
\newblock A general strategy proof fair allocation mechanism.
\newblock \emph{Economics Letters}, 81\penalty0 (1):\penalty0 73--79, 2003.

\bibitem[Svensson(1983)]{svensson1983rental}
Lars-Gunnar Svensson.
\newblock Large indivisibles: An analysis with respect to price equilibrium and
  fairness.
\newblock \emph{Econometrica}, 51\penalty0 (4):\penalty0 939--954, 1983.

\bibitem[Varian(1974)]{varian1974equity}
Hal~R. Varian.
\newblock Equity, envy, and efficiency.
\newblock \emph{Journal of Economic Theory}, 9\penalty0 (1):\penalty0 63--91,
  1974.

\bibitem[Zhou(1990)]{zhou1990house}
Lin Zhou.
\newblock On a conjecture by {G}ale about one-sided matching problems.
\newblock \emph{Journal of Economic Theory}, 52\penalty0 (1):\penalty0
  123--135, 1990.

\end{thebibliography}

\end{document}